\documentclass[10pt,conference]{IEEEtran}

\usepackage[english]{babel}
\usepackage{amssymb}
\usepackage{amsmath}
\usepackage{amsthm}
\usepackage{graphicx}
\usepackage{xspace}
\usepackage{subcaption}
\usepackage{braket}
\usepackage[colorlinks=true, allcolors=blue]{hyperref}

\newcommand{\om}{group-evolve\xspace}

\newtheorem{theorem}{Theorem}

\newtheorem{obs}[theorem]{Observation}

\title{An Analysis of Commutation-Based Trotter Ordering Strategies on Heisenberg-Style Hamiltonians}

\author{\IEEEauthorblockN{Reuben Tate, Shamminuj Aktar, Stephan Eidenbenz}
\IEEEauthorblockA{\textit{CAI-3: Information Sciences} \\
\textit{Los Alamos National Laboratory}\\
Los Alamos, New Mexico, USA}
}
\begin{document}
\maketitle

\begin{abstract}
Trotterization is a technique that allows one to approximate a time evolution of a Hamiltonian by repeatedly evolving the individual terms of the Hamiltonian one-at-a-time for small time durations. Bounds on the error of this approximation exist; however, they are typically loose and moreover, it is known that the true error can be greatly influenced by the order in which the terms of the Hamiltonian are evolved. In this work, we consider various ordering strategies that exploit the commutation structure of the Hamiltonian, in addition to a few other baseline ordering strategies. These commutation-based strategies involve dividing the terms of the Hamiltonian into groups where all the terms within each group commute with one another. These groupings can be obtained by using graph coloring techniques on what we call the ``commutation graph" of the Hamiltonian. We prove various results regarding the structure and properties of such commutation graphs for certain classes of Hamiltonians. We also empirically calculate the (true) Trotter error using these ordering strategies on various 1D and 2D Heisenberg-style systems.
\end{abstract}

\section{Introduction}
Trotterization, also known as the Lie–Trotter–Suzuki product formula \cite{trotter1959product, suzuki1990fractal, suzuki1992general}, is one of the most widely used techniques for simulating quantum dynamics. Given a Hamiltonian decomposed as a sum of non-commuting terms $H = \sum_j H_j$, the exact time-evolution operator $e^{-iHt}$, can be approximated by a product of exponentials of the individual terms, applied over many short time steps \cite{lloyd1996universal}. This decomposition enables a direct mapping of continuous-time quantum evolution onto discrete gate sequences, making it especially attractive for digital quantum simulation on gate-based quantum computers.

The usefulness of Trotterization stems from both its conceptual simplicity and its flexibility. By increasing the number of Trotter steps (or equivalently decreasing the step size), the approximation error can in principle be made arbitrarily small, and higher-order Suzuki formulas can further improve accuracy. Consequently, Trotter-based methods have become a standard tool for Hamiltonian simulation, with applications ranging from quantum chemistry to condensed matter physics and quantum algorithms such as quantum phase estimation (QPE) \cite{lloyd1996universal}. At the same time, there is an inherent trade-off between approximation accuracy and resource cost: increasing the number of steps improves the mathematical approximation but leads to longer circuits and greater exposure to hardware noise.

A substantial theoretical literature has established rigorous bounds on the error introduced by Trotterization, typically expressed in terms of norms of nested commutators and scaling as powers of the timestep and simulation time \cite{childs2021theory}. However, it is now well understood that these bounds can be highly pessimistic. In quantum chemistry, for example, worst-case bounds on Trotter error have been shown to overestimate the true error by many orders of magnitude \cite{babbush2015chemical,reiher2017elucidating}. More generally, standard analyses often neglect structure such as locality, commutativity, or entanglement, leading to bounds that fail to capture typical performance. 

Due to non-commutativity, the exact Trotter error can change by simply rearranging the ordering of the terms in the Hamiltonian; most Trotter error bounds are order-agnostic and thus cannot capture the effects of re-ordering. In this work, we examine the effect of various ordering strategies and the effect they have on the (exact) Trotter error.

The existence of Trotter error is due to the inherent non-commutativity between certain pairs of terms. This motivates an approach that considers the commutation structure between all pairs of terms in the Hamiltonian and we use this structure to find groups of terms where all the terms within each group commute with one another; we formalize this approach via a graph-theoretic lens. In prior work \cite{tranter2019ordering}, the Trotter ordering was determined by picking out terms, one-by-one, among such groups based on some simple rule; they referred to these approaches as equaliseGroups and depleteGroups. We consider what is arguably a more natural approach where we time-evolve entire groups at a time; since each group has terms that mutually commute, the evolution of each group is exact. We call this method the \emph{\om} method. We compare these approaches against baseline approaches where the terms are ordered either by the magnitude of the coefficients or by a lexicographical ordering.

Our results confirm previous findings that show that Trotter error is heavily dependent on the ordering \cite{tranter2019ordering}. While the relative performance differences across different ordering strategies is dependent on several factors (Trotter order, number of Trotter steps, etc), we find that our \om method is often the top-performing one.

\section{Background}

\subsection{Hamiltonian Basics}
\label{sec:hamBasics}
We let $X,Y,Z$ denote the Pauli matrices. We let $X_i, Y_i, Z_i$ denote the Pauli $X,Y,Z$ matrix acting on the $i$th qubit respectively. A \emph{Pauli string} is any tensor product of Pauli and $2\times 2$ identity matrices. 
%A Pauli string is considered $k$-local if the number of Pauli matrices that appear as factors in the corresponding tensor product is at most $k$; we say that it is strictly $k$-local if the number such Pauli matrices is exactly $k$.
A Pauli string is considered non-mixed if it only contains at most one type of Pauli matrix among $\{X,Y,Z\}$, e.g., $I$, $X_2X_4$, $Y_3$, $Z_1Z_2Z_4$ are examples of non-mixed Pauli strings. 

Any $n$-qubit Hamiltonian can be written uniquely as a linear combination of Pauli strings; many of the results in this paper assume one already has the Hamiltonian already represented in such a form. If all of the Pauli strings of a Hamiltonian are non-mixed, we say that the Hamiltonian itself is non-mixed.

\subsection{Trotter Ordering and Trotter Error}
For a Hamiltonian $H = \sum_{j=1}^k H_j$, the 1st-order Trotterization with time $t$ and $s$ steps is given by:
$$\left(\prod_{j=1}^k \exp\left(-i\frac{t}{s} H_j\right)\right)^s \approx \exp(-itH).$$

There are various notions of Trotter error. Using $\mathcal{F}(t)$ to denote the unitary corresponding to some Trotterization, one of the most common Trotter error types is to consider the matrix norm of the difference of the unitaries, i.e. $\Vert e^{-itH} - \mathcal{F}(t)\Vert,$ for some choice of norm.

In this work, we consider a state-dependent notion of Trotter error which is computationally faster to compute (see Section \ref{sec:eval_method}). In particular, for a given initial state $\ket{\psi_i}$, we calculate the fidelity which is measured as
\begin{equation}
  F = \bigl|\langle \psi_{\mathrm{exact}} \mid
             \psi_{\mathrm{Trotter}} \rangle\bigr|^2,
  \label{eq:fidelity}
\end{equation}
where $|\psi_{\mathrm{exact}}\rangle = e^{-iHt}\ket{\psi_i}$ and $|\psi_{\mathrm{Trotter}}\rangle = \mathcal{F}(t)\ket{\psi_i}$~\cite{jozsa1994fidelity}. See Section \ref{sec:eval_method} regarding our choice of initial state in our empirical results.

Formally, a Trotter ordering can be associated with a permutation function $\sigma: [k] \to [k]$, in which case, the Trotterization becomes:
$$\exp(-itH) \approx \left(\prod_{j=1}^k \exp\left(-i\frac{t}{s} H_{\sigma(j)}\right)\right)^s.$$

For most Hamiltonians, the optimal choice of ordering $\sigma$ that minimizes the Trotter error becomes intractable as the number of terms, $k$, increases; a brute force search requires the consideration of $k!$ different orderings. We investigate different strategies for constructing good orderings in the following subsections.

To avoid confusion, we want to remind the reader that the Trotter \emph{ordering} and the Trotter \emph{order} are two different concepts. The former is the primary focus of this paper; meanwhile, the Trotter order $p$ refers to categories of product formulas whose Trotter error (in the traditional sense) scales as $O(t^{p+1})$ for a fixed number of steps $r$ \cite{suzuki1990fractal, suzuki1992general}.

For our \om strategy, we consider \emph{groups} of terms where the terms within each group commute with one another. For 1st order and 2nd order Trotterization, these groupings can be ``flattened" into an ordering; the ordering \emph{within} each group does not matter in this case. For higher-order Trotter formulas however; it is advised that each group be treated as a single ``unit" for the purposes of Trotterization; we explain this more in Section \ref{sec:grouping_ham_terms}.

\section{Ordering Strategies}

\subsection{Commutation Graphs}
\label{sec:commutationGraphs}
The primary ordering strategies we consider are based on the commutativity structure of the Hamiltonian. We formalize this by creating the \emph{commutation graph} for each Hamiltonian which we define below.

We denote the commutation graph of a Hamiltonian
%\footnote{There is some abuse of notation, as the commutation graph is formally defined on some collection of sub-Hamiltonians (Pauli strings in our case). Since, in this work, we are only dealing with Pauli string decompositions, which are unique, then this will not be an issue.} 
$H$ as $C(H)$. Let $H = \sum_{i \in I} c_i P_i$ be the decomposition of $H$ into Pauli strings. Then the vertex set of $C(H)$ is exactly the set of Pauli strings $\{P_i\}_{i \in [k]}$, and two Pauli strings $P$ and $P'$ are considered adjacent in $C(H)$ if $P$ and $P'$ do \emph{not} commute. 
% More succintly:
% $$V(C(H)) = \{P_i\}_{i \in I}$$
% $$E(C(H)) = \left\{(P, P') \in {V \choose 2} :[P, P'] \neq 0\right\}.$$

By construction, any independent set in $C(H)$ (i.e. a collection of vertices with no edges between them in $C(H)$), corresponds to a group of Pauli terms that all commute with one another. Since a proper (vertex) coloring of any graph partitions the vertices into disjoint collections of independent sets (based on the color classes), we have the following observation.

\begin{obs}
    Let $c$ be a coloring of $C(H)$ with the minimum $\chi(C(H))$ number of colors; then the color classes of $c$ correspond to a disjoint collection of groups of Pauli strings where the Pauli strings within each group commutes and the number of groups is as small as possible.
\end{obs}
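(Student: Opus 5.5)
The plan is to establish a bijection between proper colorings of $C(H)$ and partitions of the Pauli strings of $H$ into mutually commuting groups. The number of colors used then equals the number of groups, and the observation follows by taking minima on both sides. Throughout I use only the definition of $C(H)$ given above: the vertices are the Pauli strings $P_i$ of the (unique) Pauli decomposition of $H$, and $P \sim P'$ exactly when $[P,P'] \neq 0$.

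\emph{Step 1 (colorings give commuting groups).} Let $c$ be any proper coloring of $C(H)$ with color classes $V_1,\dots,V_m$. These classes are pairwise disjoint and cover the vertex set. By properness no edge joins two vertices of the same class, so each $V_j$ is an independent set. By the definition of adjacency, any two distinct $P,P' \in V_j$ satisfy $[P,P']=0$, and trivially $[P,P]=0$. Hence each $V_j$ is a group of mutually commuting Pauli strings, and the number of groups is $m$.

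\emph{Step 2 (commuting groups give colorings).} Conversely, let $G_1,\dots,G_m$ be any partition of the Pauli strings into groups whose members pairwise commute. Assign color $j$ to every string in $G_j$. If $P$ and $P'$ are adjacent, then $[P,P']\neq 0$, so they cannot lie in the same $G_j$ and therefore receive different colors. The coloring is thus proper and uses at most $m$ colors; it uses exactly $m$ if all groups are nonempty, which we may assume.

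\emph{Step 3 (minimality).} Steps 1 and 2 show that the smallest number of groups in a commuting partition equals the smallest number of colors in a proper coloring, namely $\chi(C(H))$. So a coloring $c$ attaining $\chi(C(H))$ yields, by Step 1, a commuting partition with exactly $\chi(C(H))$ groups. Suppose some commuting partition had fewer groups. Step 2 would then turn it into a proper coloring with fewer than $\chi(C(H))$ colors, contradicting the definition of the chromatic number. Hence the number of groups produced by $c$ is as small as possible.

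There is no real obstacle here; the argument is a direct translation of definitions. The one point needing care is that the correspondence really is between vertices and Pauli strings. That relies on the uniqueness of the Pauli decomposition noted in Section~\ref{sec:hamBasics}, so that distinct terms are distinct vertices. It also relies on treating ``group'' as a block of a partition, so that disjointness matches the disjointness of color classes.
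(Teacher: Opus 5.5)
Your proposal is correct and is essentially the paper's argument: the paper justifies the observation by noting that independent sets of $C(H)$ are exactly groups of mutually commuting Pauli strings, and that the color classes of a proper coloring partition the vertices into such independent sets. You additionally make explicit the converse direction (a commuting partition yields a proper coloring) and the resulting minimality contradiction, which the paper leaves implicit in its remark that minimizing the number of groups ``is equivalent to'' minimum vertex coloring.
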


In other words, finding a valid grouping of the Pauli strings (as defined in Section \ref{sec:grouping_ham_terms}) with as few groups as possible is equivalent to vertex-coloring some graph with as few colors as possible. Although the problem of finding such an optimal coloring for arbitrary graphs is known to be NP-Hard \cite{karp2009reducibility}, we will later see (Section \ref{sec:theory}) that for certain categories of Hamiltonians $H$, that the commutation graph $C(H)$ has certain properties and structures that can be exploited, making them easy to color.

% For each Hamiltonian $H$, we will consider what we call its commutation graph $G(H)$. Each vertex of $G(H)$ corresponds to a term in the Hamiltonian. Note that the notion of a ``term" will be dependent on how the Hamiltonian is expressed; when the context is not clear, we will add a subscript (i.e. JW for Jordan-Wigner, BK for Bravyi-Kitaev, and Fer for the original pre-transformed formulation with fermionic terms).

% For each Hamiltonian term $h_1, h_2 \in V(G)$

The runtime to construct the commutation graph scales as $O(k^2 f(n))$ where $k$ is the number of terms and $f(n)$ is the time it takes to check commutation of any pair of terms. It is known that for Pauli strings, a naive approach gives $f(n) = O(n)$; however, if the Pauli strings are represented in a particular way (bitstring symplectic form), and if the representation fits in a constant number of machine words, then commutativity checking is (effectively) $O(1)$ in the word-RAM model of computing \cite{dion2024efficiently, gottesman1997stabilizer}.

\subsection{Commutation-Based Ordering Strategies}
We consider 3 different ordering strategies that utilize the coloring/groups of the commutation graph described in the previous section.

We first describe our method, which we call \emph{\om}. For each group, we (exactly) evolve the entire group at once. For 1st-order Trotterization, this is equivalent to picking an ordering $\sigma$ where we first pick all the terms in the first group, then all the terms in the second group, etc. We make this approach more formal in the next section.

The other two methods, \emph{equaliseGroups} and \emph{depleteGroups}, also use the groups from the commutation graph; we refer the reader to \cite{tranter2019ordering} for more details. We break ties by index in the original file. 

% \textbf{equaliseGroups}: performs the equaliseGroups strategy in \cite{tranter2019ordering}. Builds the ordering by picking out the term with the highest magnitude coefficient within the group that has the most terms. If there are multiple groups with the same number of terms, we pick out the term with the highest magnitude among the union of those groups. For terms with the same magnitude, \cite{tranter2019ordering} does not discuss how ties are broken, but we break ties by index in the original file. 

% \textbf{depleteGroups}: performs the depleteGroups strategy in \cite{tranter2019ordering}. Cycles through groups, picking the term with the highest coefficient magnitude amongst remaining terms, and appends it to the ordering.  For terms with the same magnitude,\cite{tranter2019ordering} does not discuss how ties are broken, but we break tiesby index in the original file.

\subsection{Grouping Hamiltonian Terms}
\label{sec:grouping_ham_terms}
Suppose $H$ is expressed as a sum of Pauli strings
%\footnote{The key idea in this section still works if the $P_i$'s are replaced by other sub-Hamiltonians $H_i$ for which we know how to execute $\exp(-itH_i)$ exactly on a quantum device; we assume these are Pauli strings in this work for ease of presentation.}
(e.g. $H = \sum_{i \in I} P_i$ with some index set $I$); for simplicity, here we assume that the coefficient of the Pauli string is absorbed into each $P_i$. Let us consider \emph{partitioning} the set of Pauli terms into $k$ disjoint groups $P^{(1)}, P^{(2)}, \dots,  P^{(k)}$, where the Pauli terms within each group all commute with one another, i.e., for all $i$, if $P, P' \in P^{(i)}$, then $P$ and $P'$ commute. For each Pauli grouping, $P^{(i)}$, let $H^{(i)}$ be the corresponding Hamiltonian, i.e., $H^{(i)} = \sum_{P \in  P^{(i)}} P$. Observe that $H = \sum_{i=1}^k H^{(i)}$.

Instead of performing Trotterization on the individual Pauli strings, we consider Trotterization with respect to the groups, i.e., we approximate the time-evolution (with time $t$) of Hamiltonian $H$ as:
\begin{equation}\label{eqn:approx_ham_group}\exp(-iHt) \approx \left(\prod_{i \in [k]} \exp(-itH^{(i)}/s)\right)^s ,\end{equation}
for some integer number of steps $s\geq 1$. Since all the Pauli terms commute within each group, we have the familiar exponential rule for each group:

\begin{equation}\label{eqn:exact_time_evolution}\exp(-itH^{(i)}/s) = \prod_{P \in  P^{(i)}} \exp(-itP/s),\end{equation}
% \begin{equation}\label{eqn:exact_time_evolution}\exp(-itH^{(i)}/s) = \exp\left(\sum_{P \in  P^{(i)}} -itP/s\right),\end{equation}
% $$= \prod_{P \in  P^{(i)}} \exp(-itP/s),$$

meaning that the time-evolution of $H^{(i)}$ can be executed \emph{exactly} through a sequence of operations that are easily implementable on a quantum device.  This means that the Trotter error \emph{within} each group is 0. Note that the order of the Pauli terms in the product does not matter (if $H$ and $H'$ commute, then it is straightforward to show that $\exp(-itH)$ and $\exp(-itH')$ commute for any $t$). Combining the above two equations yields the following approximation for the full Hamiltonian:
\begin{equation}\label{eqn:approx_ham_group_2}\exp(-iHt) \approx \left(\prod_{i\in [k]} \left(\prod_{P \in P^{(i)}} \exp(-itP/s)\right)\right)^s,\end{equation}

which is effectively a Trotterization where the ordering of the Pauli strings is (partly) determined by the groups $\{P^{(i)}\}_{i=1}^k$. Note that the ordering \emph{within} each group (inner product of Equation \ref{eqn:approx_ham_group_2}) does not matter due to commutativity; however, the ordering of the groups themselves (outer product of Equation \ref{eqn:approx_ham_group_2}) will yield different results (and thus potentially different Trotter errors).

In the optimistic case where there are $k=1$ groups, this means that the Trotterization of the entire Hamiltonian $H$ has zero error. In the trivial case where there are $k = |I|$ groups, i.e., each group has just a single Pauli string, the approximation in Equation \ref{eqn:approx_ham_group_2} is effectively just a ``regular" Trotterization, i.e., nothing is gained. Intuitively, larger groups are more ideal since that allows us to perform exact time evolution for a large chunk of the Hamiltonian exactly. Larger (average) group sizes correspond to a smaller number of groups; with $k=1$ groups being the most ideal case. For this reason, we use the number of groups $k$, as a proxy for the quality of the grouping  $\{P^{(i)}\}_{i \in [k]}$. Thus, optimizing via this proxy corresponds to minimizing the number of groups $k$, which, as we saw before (Section \ref{sec:commutationGraphs}), is exactly equivalent to finding a minimum vertex coloring of the Hamiltonian's commutation graph.

\subsection{Baseline Ordering Strategies}
\label{sec:baseline_orderings}

Below, we present two simple baseline ordering strategies (introduced in \cite{tranter2019ordering}) to compare our commutation-based ordering strategies against.

\textbf{Magnitude Ordering:} In the magnitude ordering scheme, we sort the Hamiltonian terms in descending order based on the absolute value of the coefficients associated with that term, as is done in \cite{tranter2019ordering}. We break ties based on the index of the Pauli string in the file.

\textbf{Lexicographical Ordering:} We write each term as an $n$-character Pauli string with identities, e.g., for a 5-qubit system, we write $Z_1X_2Y_4$ as $ZXIYI$. We then sort these $n$-character strings lexicographically with the convention $I < X < Y < Z$. This ordering is heavily dependent on how the qubits are indexed; we describe the indexing schemes of the Hamiltonians we consider in Appendix \ref{sec:qubitIndexing}. We did consider a lexicographical ordering on the ``sparse" representation of the Pauli strings; however, preliminary results had shown that such orderings are typically worse than random, and thus, we do not include such sparse lexicographical orderings in this work.

% \textbf{Sparse Lexicographical Ordering:} In \cite{tranter2019ordering}, they also consider a lexicographical ordering of the Hamiltonian terms. Such an ordering is heavily dependent on the representation of the each Hamiltonian term and also the qubit indexing scheme; unfortunately, the details of the lexicographical ordering of \cite{tranter2019ordering} are unclear. Given a Hamiltonian written as a sum of Pauli strings, we assume each Pauli string is written in a ``sparse" format such as $X_2Y_3X_5Z_6$ (instead of a dense format as $IXYIXZ$). Each sparsely written pauli string is then converted into a pair of tuples where the first tuple contains the indices and the second tuple contains the type of Pauli, i.e., $X_2Y_3X_5Z_6$ becomes $((2,3,5,6), (X, Y, X, Z))$. The lexicographical ordering is then performed on such pairs of tuples; when comparing Pauli types, we use the convention that $X < Y < Z$. The effect of doing the lexicographical ordering in this manner ensures that pauli strings with the same set of ``active" indices appear next to each other in the ordering. We describe the indexing schemes of the Hamiltonians we consider in Appendix \ref{sec:qubitIndexing}.

\section{Theory}
\label{sec:theory}
\subsection{Bounds on Number of Colors}
Recall from Section \ref{sec:hamBasics} that a non-mixed Hamiltonian has only Pauli strings where each string contains at most one Pauli type. For such Hamiltonians, we can explicitly find a coloring, which we call the \emph{XYZ-coloring}, with at most 3 colors, one for each Pauli type.
\begin{theorem}[XYZ-Coloring]
\label{thm:xyz_coloring}
    Let $H = \sum_{i \in I} P_i$ be a non-mixed Hamiltonian. Then $\chi(C(H)) \leq 3$.
\end{theorem}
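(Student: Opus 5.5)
The plan is to exhibit an explicit proper 3-coloring of $C(H)$; then $\chi(C(H)) \leq 3$ follows immediately from the definition of the chromatic number. Since $H$ is non-mixed, every Pauli string $P_i$ in its decomposition contains at most one Pauli type among $\{X,Y,Z\}$. I will assign to each $P_i$ a color $c(P_i) \in \{X,Y,Z\}$ equal to the unique Pauli type appearing in it. If $P_i$ is the identity string (no Pauli factors), any color may be used, say $X$. The identity commutes with everything and is therefore an isolated vertex of $C(H)$, so this choice is harmless.

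The key step is to verify that each color class is an independent set in $C(H)$. In other words, any two Pauli strings built only from one Pauli type $\sigma\in\{X,Y,Z\}$ and identities must commute. Take two such strings $P=\bigotimes_j A_j$ and $P'=\bigotimes_j B_j$ with each $A_j,B_j\in\{I,\sigma\}$. Then
$$PP' = \bigotimes_j A_jB_j = \bigotimes_j B_jA_j = P'P,$$
because on each qubit the factors are each either $I$ or $\sigma$, and $\sigma$ commutes with itself and with $I$. Hence $[P,P']=0$, so $P$ and $P'$ are not adjacent in $C(H)$. Equivalently, two Pauli strings anticommute exactly when they differ by distinct non-identity Paulis on an odd number of qubits, and strings of the same type never differ in this way.

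So $c$ is a proper coloring using at most $3$ colors, and $\chi(C(H))\le 3$. There is no substantial obstacle here. The only points to handle with care are the identity term, treated above, and the tensor-product factorwise multiplication rule, which is standard.
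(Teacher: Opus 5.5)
Your proof is correct and follows essentially the same route as the paper: partition the strings by their unique Pauli type into $P^X, P^Y, P^Z$, then observe that same-type strings commute, so each class is an independent set in $C(H)$. Your factorwise commutation check and your handling of the identity string are small but useful additions; the paper only asserts the commutation, and its disjoint-union decomposition silently skips the identity string.
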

\begin{proof}
    Since $H$ is non-mixed, the set of Pauli strings can be decomposed as a disjoint union $\{P_i\}_{i\in I} = P^X \cup P^Y \cup P^Z$ where for $\alpha \in \{X,Y,Z\}$, $P^\alpha$ denotes the set of Pauli strings in $\{P_i\}_{i\in I}$ consisting of just the Pauli matrix $\alpha$. If $P$ and $P'$ are two Pauli strings that only consist of Pauli matrices of the same type, then $P$ and $P'$ commute. Thus, $P^X, P^Y, P^Z$ are three disjoint groups of Pauli strings where all the Pauli strings within each group commutes. Identifying these 3 groups as color classes of some coloring, we have that $C(H)$ can be 3-colored, i.e., $\chi(C(H)) \leq 3$.
\end{proof}

\subsection{Handcrafted Colorings for 1D Hamiltonians}
Consider any 1D Hamiltonian $H$ of the form:
\begin{equation}
\label{eqn:general_1D_Ham}
\sum_{j=1}^k d_j X_j + \sum_{j=1}^{k-1}\sum_{\alpha \in \{X,Y,Z\}} c^\alpha_j \alpha_j \alpha_{j+1},
\end{equation}

where $c_j^\alpha$ and $d_j$ are real-valued constants for all $\alpha \in \{X,Y,Z\}$. By Theorem \ref{thm:xyz_coloring}, it is clear that $\chi(C(H)) \leq 3$ by using the XYZ-coloring. We show an alternate coloring scheme for such Hamiltonians that still uses 3 colors but with the additional property that Pauli terms along the same ``edge" in the interaction graph are always grouped together. To see how to do this, we first consider the case where $d_j = 0$ for all $j$.

\begin{theorem}
\label{thm:handcrafted_coloring}
    Let $H$ be a Hamiltonian of the form in Equation \ref{eqn:general_1D_Ham} with $d_j = 0$ for all $j \in [k]$. Then $\chi(C(H)) \leq 2$.
\end{theorem}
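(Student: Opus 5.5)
The plan is to exhibit an explicit proper 2-coloring of $C(H)$ by grouping Pauli terms according to the edge $(j,j+1)$ of the 1D chain on which they act. With $d_j=0$, every Pauli string of $H$ has the form $\alpha_j\alpha_{j+1}$ for some $j\in[k-1]$ and $\alpha\in\{X,Y,Z\}$. I will assign color $1$ to all terms with $j$ odd and color $2$ to all terms with $j$ even. The claim $\chi(C(H))\le 2$ then reduces to showing that each color class is an independent set in $C(H)$, i.e., that any two terms of the same color commute. (If some $c_j^\alpha$ vanish, the corresponding strings are simply absent; this only removes vertices, so it cannot break the argument.)

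The key steps, in order, are as follows.

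First, I handle two terms on the same edge. Take $\alpha_j\alpha_{j+1}$ and $\beta_j\beta_{j+1}$. If $\alpha=\beta$ they are equal, so they commute. If $\alpha\neq\beta$, then on qubit $j$ the factors $\alpha$ and $\beta$ anticommute, and likewise on qubit $j+1$. The product therefore picks up a sign $(-1)^2=1$, so the two strings commute. This is the standard fact that two Pauli strings commute iff they anticommute on an even number of sites.

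Second, I handle terms on distinct edges of the same parity. Take $\alpha_j\alpha_{j+1}$ and $\beta_{j'}\beta_{j'+1}$ with $j\neq j'$ and $j\equiv j' \pmod 2$. Then $|j-j'|\ge 2$, so the supports $\{j,j+1\}$ and $\{j',j'+1\}$ are disjoint. Operators acting on disjoint qubits commute, so these strings commute.

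Third, I combine the two cases. Every pair of same-colored terms falls into one of the two cases above, so both color classes are independent sets. Hence the coloring is proper and uses at most 2 colors.

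I expect no real obstacle here. The only point needing care is the same-edge, different-Pauli-type case, where one must invoke the even-count anticommutation rule rather than disjointness of supports. It is also worth remarking in passing that the bound is tight in general: for example, $X_1X_2$ and $Z_2Z_3$ anticommute, so $\chi(C(H))=2$ whenever adjacent edges carry differing Pauli types. That remark is not needed for the upper bound.
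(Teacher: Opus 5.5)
Your proposal is correct and matches the paper's proof. The paper likewise colors each term $\alpha_j\alpha_{j+1}$ by the parity of $j$, and notes that same-colored terms either act on disjoint qubits or on the same pair of qubits, where the two anticommuting positions make them commute.
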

\begin{proof}
    Consider any Pauli string of the form $c^\alpha_j \alpha_j \alpha_{j+1}$ where $\alpha \in \{X,Y,Z\}$. If $j$ is odd, then color the Pauli string red, otherwise, if $j$ is even, color it blue. It remains to show that this 2-coloring is proper. Observe that with this coloring, any two Pauli strings with the same color must either be on disjoint qubits (in which case they clearly commute) or the same set of qubits. Suppose, it's the latter case, i.e., we have two Pauli strings $P = c^\alpha_j \alpha_j\alpha_{j+1}$ and $Q = c^\beta_j \beta_j \beta_{j+1}$ with $\alpha, \beta \in \{X,Y,Z\}$ with $\alpha \neq \beta$. Note that $\alpha$ and $\beta$ anticommute, and thus, $P$ and $Q$ have an even number of anticommuting positions, implying that $P$ and $Q$ commute and that the 2-coloring is proper.
\end{proof}

In the case that $H$ has terms $d_jX_j$ with $d_j \neq 0$, the coloring in Theorem \ref{thm:handcrafted_coloring} can be extended by introducing another color (e.g. green) and coloring all the single-site terms with this new color, thus making $\chi(C(H)) \leq 3$.

\section{Results and Evaluation}
\label{sec:results}

\subsection{Experimental Setup}
\label{sec:setup}

\subsubsection{Hamiltonians}
\label{sec:setup_hamiltonians}

We evaluate all ordering strategies on three families of Heisenberg-style Hamiltonians spanning 1D and 2D geometries. The primary system is the transverse-field XXZ spin chain modeling the quantum magnet Cs$_2$CoCl$_4$~\cite{laurell2021},
\begin{equation}
  H_{\mathrm{1D}} = \sum_{\langle i,j \rangle}
    \Bigl( X_i X_j + Y_i Y_j + \Delta\, Z_i Z_j \Bigr)
    + g \sum_i X_i,
  \label{eq:H1D}
\end{equation}
\begin{figure*}[t!]
    \centering
    \includegraphics[width=\textwidth]{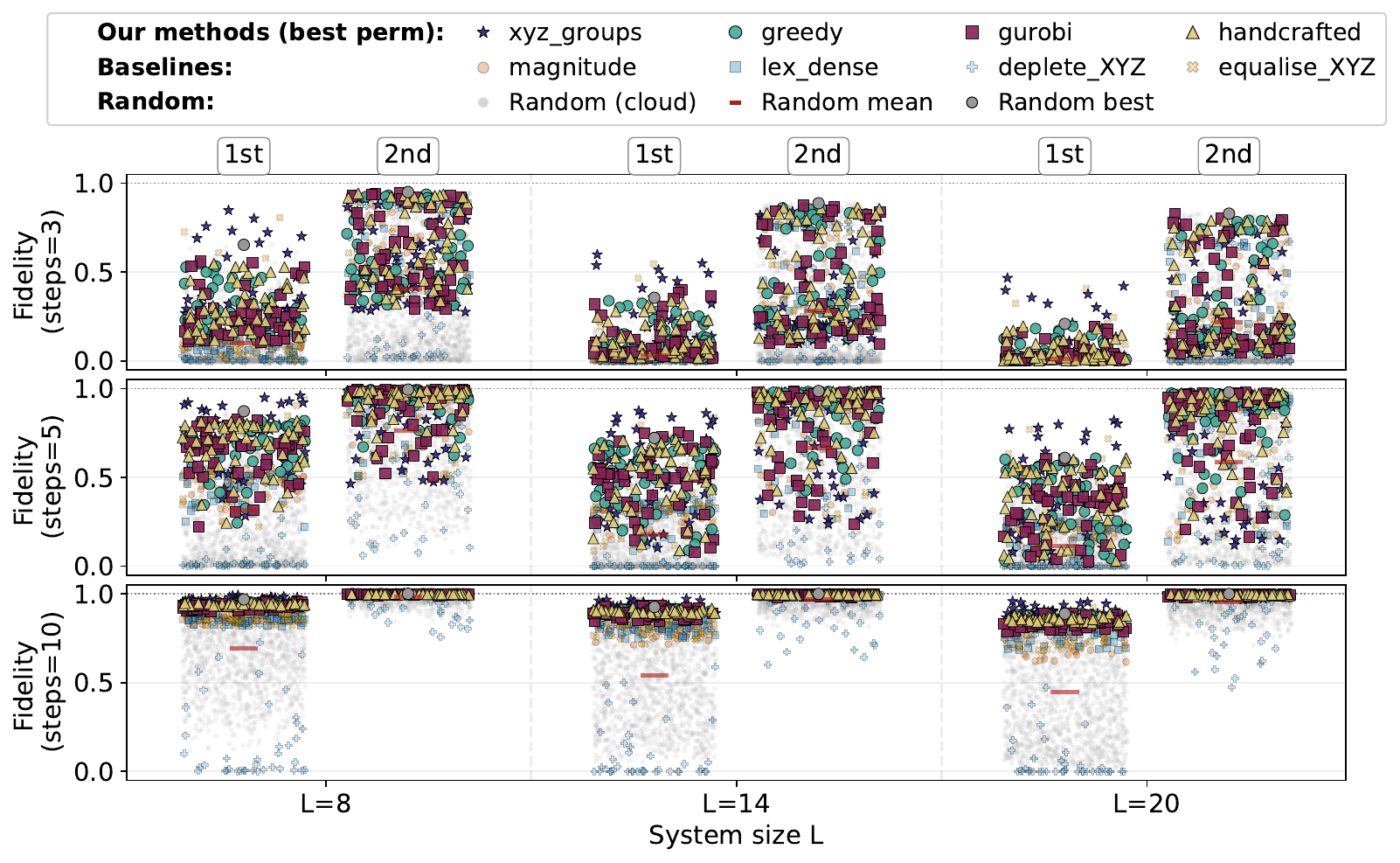}
    \caption{
    Fidelity of each ordering method as a function of system size for the 1D XXZ chain
    with $T = 5.0$. System sizes shown are $L \in \{8, 14, 20\}$, each containing 50 Hamiltonian instances swept over $\Delta$ and $g$. Within each system size group, left and right clusters show 1\textsuperscript{st} and 2\textsuperscript{nd} order Trotter, respectively.
    Rows correspond to Trotter step counts $s \in \{3, 5, 10\}$.
    At low step counts the random fidelity distribution spans nearly the full $[0, 1]$ interval, confirming that ordering choice has a large effect. The commutation graph-based orderings consistently sit in the upper tail of the distribution across all system sizes. As the step count increases to 10, all methods compress toward high fidelity, but the structured orderings still maintain an edge over the random mean.
}
    \label{fig:scatter_1d}
\end{figure*}
expressed in dimensionless units ($J = 1$, $\hbar = 1$), where $\Delta$ is the exchange anisotropy and $g = h_x/J$ is the dimensionless transverse field. We sweep $\Delta \in \{0.12,\, 0.25\}$ and $g \in \{0.1, 0.2, \ldots, 2.5\}$ (25 values). System sizes range from $L = 3$ to $L = 20$ under open boundary conditions. Each instance contains $4L - 3$ Pauli terms: $3(L-1)$ two-body terms of the form $\alpha_i \alpha_{i+1}$ for $\alpha \in \{X, Y, Z\}$, plus $L$ single-site terms $X_i$. This yields 50 Hamiltonians per system size and 900 instances in total. The secondary systems are two families of 2D spin models. The triangular model,
\begin{align}
  H_{\mathrm{tri}} &=
    \sum_{\langle i,j\rangle}
      \bigl(X_iX_j + Y_iY_j + Z_iZ_j\bigr) \nonumber \\
    &\quad + \alpha \sum_{\langle\langle i,j\rangle\rangle}
      \bigl(X_iX_j + Y_iY_j + Z_iZ_j\bigr),
  \label{eq:Htri}
\end{align}
is the $J_1$--$J_2$ Heisenberg antiferromagnet on a triangular lattice, relevant to the quantum spin liquid candidate KYbSe$_2$~\cite{scheie2021witnessing}. The frustration parameter $\alpha = J_2/J_1$ is swept across 27 values in $[0.0,\, 0.50]$. The rectangular model,
\begin{equation}
  H_{\mathrm{rect}} =
    \sum_{\langle i,j\rangle}
      \bigl(X_iX_j + Y_iY_j + Z_iZ_j\bigr)
    + h_x \sum_i X_i,
  \label{eq:Hrect}
\end{equation}
is the 2D analog of $H_{\mathrm{1D}}$, with transverse field $h_x$ swept across 11 values in $[0.0,\, 3.0]$. Both 2D models use open boundary conditions and snake qubit indexing on grids $L_x \times L_y$ with $L_x \in \{2,3,4,5\}$, $L_y \in \{L_x, L_x+1\}$, and $L_x \times L_y \leq 20$, giving 4 to 20 qubits ($L = 4$ to $20$). Further details on lattice geometries and qubit indexing are provided in Appendix~\ref{sec:rectLattice}, Appendix~\ref{sec:triLattice}, and Appendix~\ref{sec:qubitIndexing}. All three Hamiltonian families are non-mixed (each Pauli string contains at most one Pauli type), so the XYZ-coloring from Theorem~\ref{thm:xyz_coloring} applies to all instances.

\begin{figure*}[t]
    \centering
    \includegraphics[width=\linewidth]{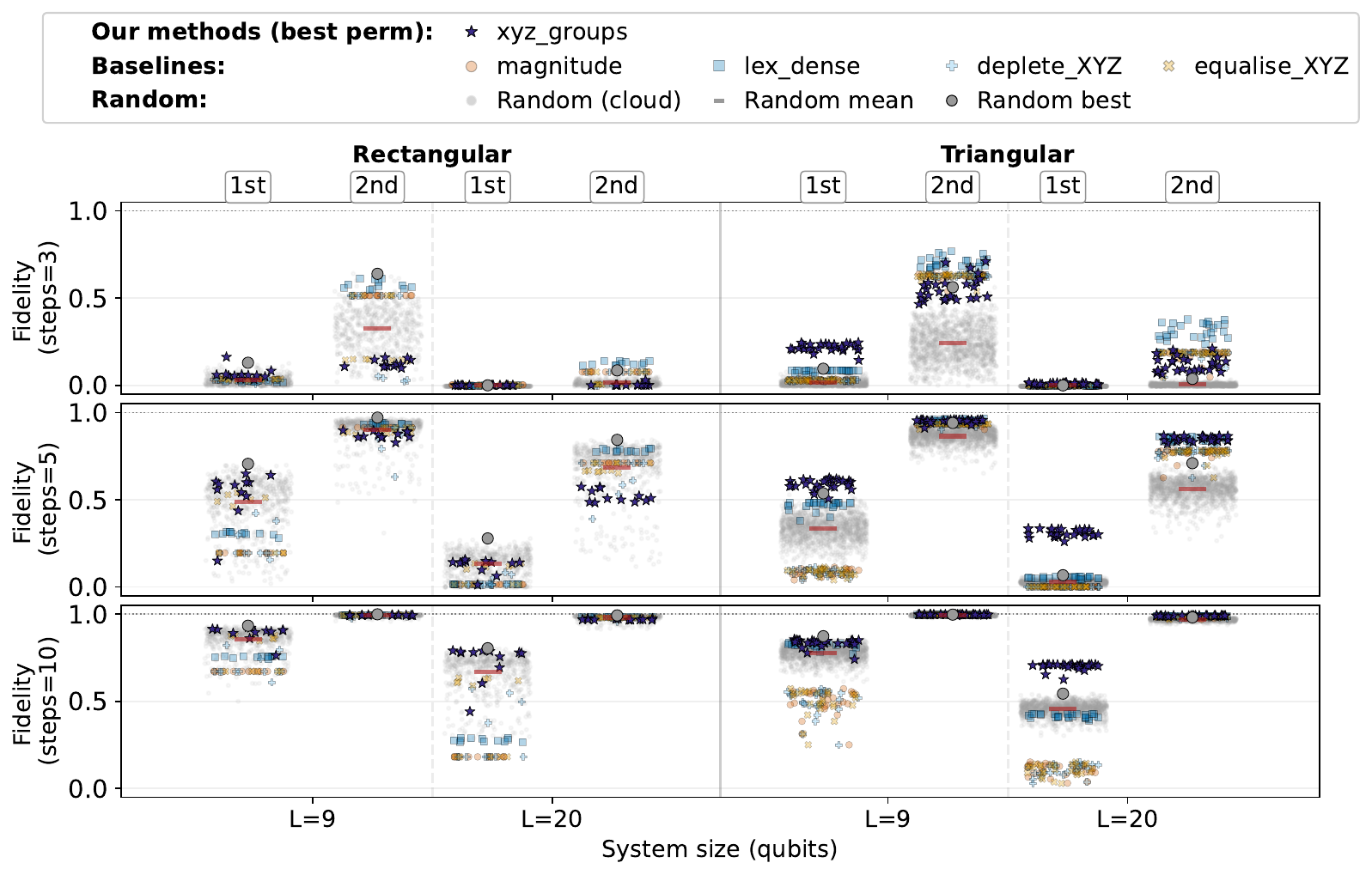}
    \caption{
    Fidelity of each ordering method for 2D rectangular (left) and triangular (right) lattices with $T = 1.0$ and $L \in \{9, 20\}$, each containing 12 rectangular (swept over $h_x$) and 27 triangular (swept over $\alpha$) Hamiltonian instances. Left and right clusters show 1\textsuperscript{st} and 2\textsuperscript{nd} order Trotter; rows show Trotter steps $s \in \{3, 5, 10\}$.
    For the \textbf{rectangular} lattice, at $s=3$ all methods yield near-zero fidelity under 1\textsuperscript{st} order, while lex\_dense achieves the highest fidelity under 2\textsuperscript{nd} order. At $s=5$ and $s=10$, xyz\_groups clearly leads under 1\textsuperscript{st} order, maintaining a gap over all other methods.
    For the \textbf{triangular} lattice, xyz\_groups leads under 1\textsuperscript{st} order across all steps. Under 2\textsuperscript{nd} order, lex\_dense dominates at $s=3$; at $s=5$ and $s=10$ all methods converge to near unity.
    }
    \label{fig:scatter_2d}
\end{figure*}

\subsubsection{Orderings}
\label{sec:setup_orderings}
We evaluate two classes of orderings: commutation-based orderings that exploit the structure of the commutation graph, and baseline orderings from~\cite{tranter2019ordering}.

\textbf{Commutation-based orderings.} For the 1D chain we obtain four distinct 3-colorings of the commutation graph: the XYZ-coloring (Theorem~\ref{thm:xyz_coloring}), a Gurobi-based optimal coloring, a greedy heuristic, and the handcrafted coloring (Theorem~\ref{thm:handcrafted_coloring}). All four yield exactly $m = 3$ commuting subgroups. We evaluate all $m! = 3! = 6$ group permutations for each, giving 24 \om orderings per Hamiltonian. For the 2D models we evaluate the XYZ-coloring with all $m! = 6$ group permutations. We also evaluate the depleteGroups and equaliseGroups strategies from~\cite{tranter2019ordering}, which build a flat ordering by picking terms one at a time from the XYZ groups based on coefficient magnitude.

\begin{figure*}[t!]
    \centering
    \begin{subfigure}[b]{\linewidth}
        \centering
        \includegraphics[width=\linewidth]{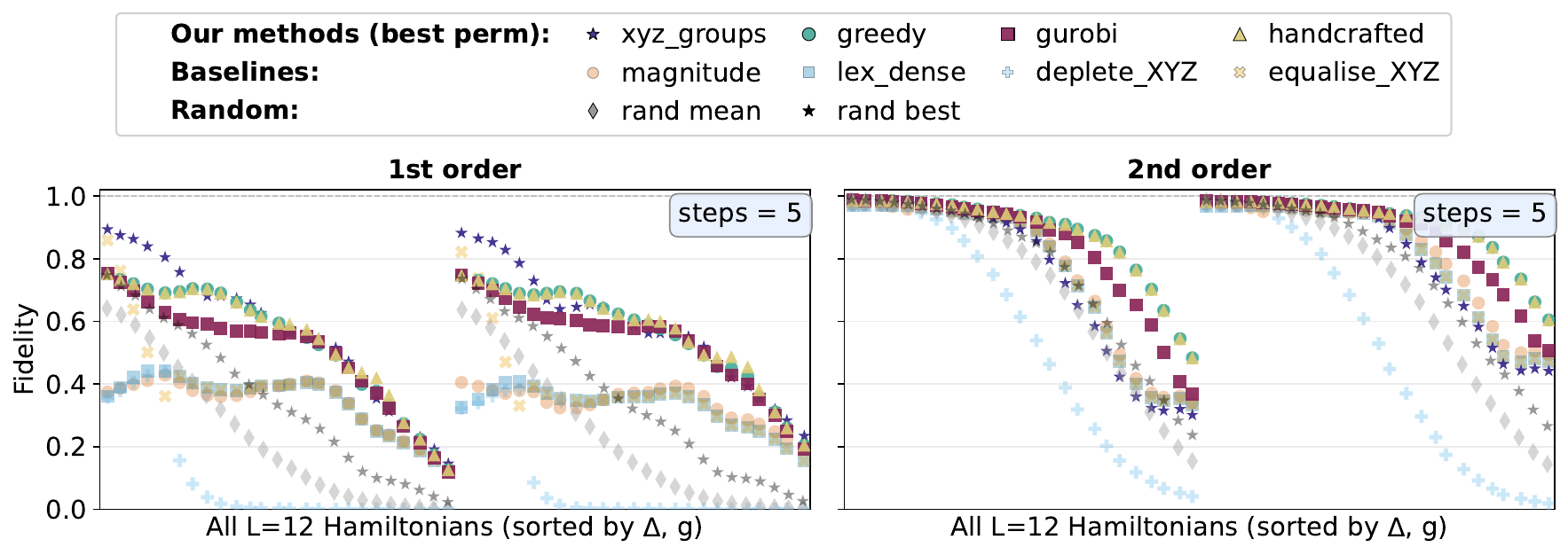}
        \caption{1D XXZ chain at $L=12$ with 50 Hamiltonians sorted by $\Delta$ and $g$.}
        \label{fig:slice_1d}
    \end{subfigure}
    
    \vspace{0.5em}
    
    \begin{subfigure}[b]{\linewidth}
        \centering
        \includegraphics[width=\linewidth]{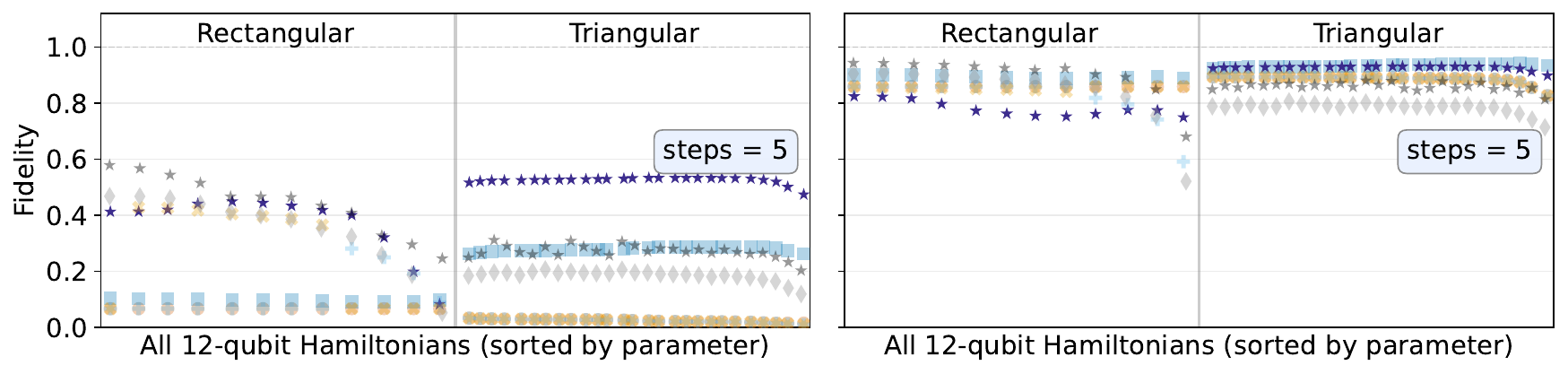}
        \caption{2D rectangular (12 Hamiltonians) and triangular (27 Hamiltonians) lattices at $L = 12$.}
        \label{fig:slice_2d}
    \end{subfigure}
\caption{
    Per-Hamiltonian fidelity of all ordering methods at $s=5$ with $L=12$.
    (\textit{Top-left: }) For the 1D chain under 1\textsuperscript{st} order Trotter, xyz\_groups consistently achieves the highest fidelity, while greedy, gurobi, and handcrafted cluster together well above the baselines.
    (\textit{Top-right: }) Under 2\textsuperscript{nd} order, greedy, gurobi, and handcrafted reach the top, with xyz\_groups falling slightly below at higher $g$.
    (\textit{Bottom-left: }) For the 2D lattices under 1\textsuperscript{st} order, xyz\_groups clearly dominates for both rectangular and triangular lattices, with all other methods well below.
   (\textit{Bottom-right: }) Under 2\textsuperscript{nd} order, for the rectangular lattice random best and lex\_dense lead and for the triangular lattice, xyz\_groups dominate at top.
    }
    \label{fig:per_hamiltonian_slices}
\end{figure*}

\textbf{Baseline orderings.}
We evaluate the two non-commutation-based orderings described in Section \ref{sec:baseline_orderings}.
%We evaluate four non-commutation-based orderings: magnitude ordering (mag\_ties), which sorts terms by descending coefficient magnitude with ties grouped together; dense lexicographic ordering (lex\_dense), which sorts by the full Pauli string; sparse lexicographic ordering (lex\_sparse), which sorts first by the active qubit indices and then by Pauli type; and magnitude-then-sparse-lex ordering (mag\_sparse\_lex), which sorts by descending coefficient magnitude and breaks ties using the sparse lexicographic order. For mag\_ties, which produces groups of terms with equal coefficient magnitude, we evaluate all permutations of these groups.

\textbf{Random baseline.} As a stochastic reference we sample $N = 30$ uniformly random term-level permutations per instance for the 1D and 2D models. We report the mean and best fidelity over these random orderings.

\subsubsection{Evaluation Methodology}
\label{sec:eval_method}

For each ordering we compute the state-dependent fidelity between the Trotterized and exact time-evolved states as defined in Eq.~\eqref{eq:fidelity}. The exact state $|\psi_{\mathrm{exact}}\rangle = e^{-iHT}|\psi_0\rangle$ is obtained via sparse Krylov matrix exponentiation, which avoids constructing the full dense matrix exponential. Writing $H = \sum_{j=1}^k c_j P_j$ where $c_j$ are real coefficients and $P_j$ are Pauli strings, the Trotterized state is computed by sequentially applying precomputed per-term unitaries $e^{-i c_j P_j \delta t}$, where $\delta t = T/s$ is the time step and $s$ is the number of Trotter steps. Since each Pauli string satisfies $P_j^2 = I$, the matrix exponential reduces to
\begin{equation}
  e^{-i c_j P_j \delta t} = \cos(c_j \delta t)\, I - i \sin(c_j \delta t)\, P_j,
  \label{eq:pauli_exp}
\end{equation}
which can be applied as a sparse matrix-vector product in $\mathcal{O}(2^L)$ time per term. All per-term unitaries are precomputed once before iterating over orderings, so the cost of evaluating each ordering is $\mathcal{O}(k \cdot s \cdot 2^L)$ where $k$ is the number of Pauli terms. % and $s$ is the number of steps.

We evaluate both 1\textsuperscript{st} order and 2\textsuperscript{nd} order (Suzuki) Trotter decompositions at step counts $s \in \{3, 5, 10, 20\}$. For 1\textsuperscript{st} order Trotter, each step applies the ordering once. For 2\textsuperscript{nd} order Trotter, each step applies the ordering followed by the reverse ordering, with each term evolved for half the time step $\delta t / 2$. The total evolution time is $T = 5.0$ for the 1D chain and $T = 1.0$ for the 2D models. The shorter time for 2D reflects the larger operator norm due to higher connectivity.

The ordering strategies proposed in this work do not depend on the choice of initial state, as they depend only on the commutation structure of the Hamiltonian. Any arbitrary initial state could be used for evaluation, and we choose the N\'{e}el state $|0101\ldots\rangle$~\cite{auerbach1994} as it produces non-trivial dynamics for all Hamiltonians considered. We evaluate all $m!$ group permutations and report the best-performing permutation, as indicated by ``best perm'' in the figure legends. Across all figures except Figure~\ref{fig:summary}, the legend is organized into three rows: the first row shows our commutation-based methods reporting the best permutation over all $m!$ group orderings, the second row shows the baseline methods, and the third row shows the random reference consisting of a grey cloud of individual random orderings, their mean, and their best. Figure~\ref{fig:summary} additionally includes a permutations row that shows all individual group orderings for each commutation-based method.

% \subsubsection{Orderings Evaluation}
% \tate{I talked a bit about Trotter error in the background section, will need to make some adjustments here to avoid repeating ourselves too much.}
% To evaluate the orderings generated from three different graph coloring techniques, we compute simulation fidelity which is measured as
% %
% \begin{equation}
%   F = \bigl|\langle \psi_{\mathrm{exact}} \mid
%              \psi_{\mathrm{Trotter}} \rangle\bigr|^2,
%   \label{eq:fidelity}
% \end{equation}
% %
% where $|\psi_{\mathrm{exact}}\rangle = e^{-iHT}|\mathbf{0}\rangle$ is obtained by sparse Krylov matrix exponentiation and $|\psi_{\mathrm{Trotter}}\rangle$ is the Trotterized state. We evaluate both first-order and second-order (Suzuki) decompositions at Trotter step counts $r \in \{3,\, 5,\, 10,\, 20,\, 40\}$ with total evolution time $T = 5.0$ (dimensionless). All per-term unitaries are precomputed once before iterating over orderings, reducing per-ordering cost from $\mathcal{O}(8^L)$ to $\mathcal{O}(2^L)$.

\begin{figure*}[t!]
    \centering
    \includegraphics[width=\linewidth]{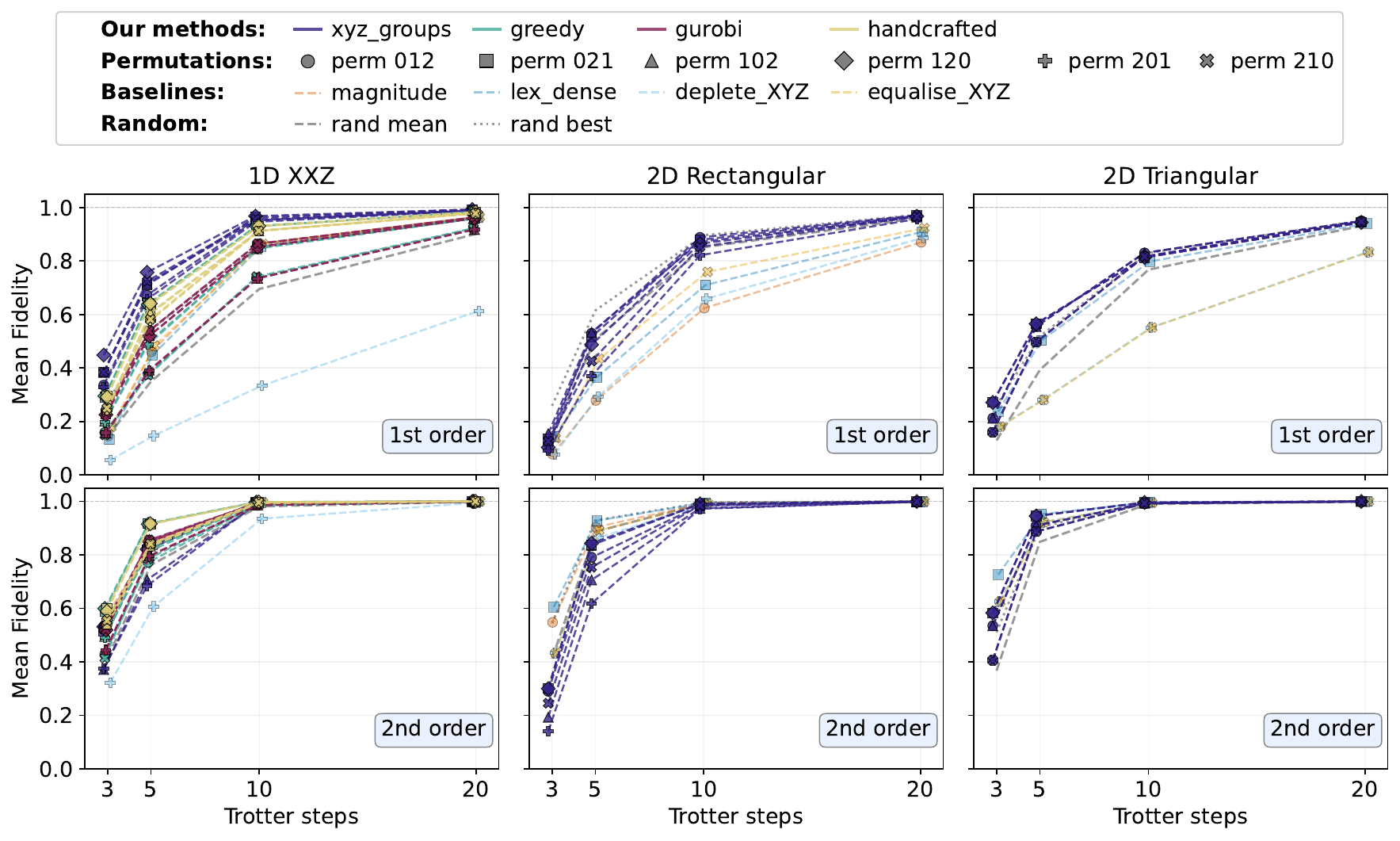}
    \caption{
    Mean fidelity per ordering method as a function of Trotter steps, averaged over all Hamiltonians and system sizes. Columns show 1D XXZ ($T=5.0$, $L=3$ to $20$), 2D rectangular, and 2D triangular lattices ($T=1.0$, $L=4$ to $20$). Rows show 1\textsuperscript{st} and 2\textsuperscript{nd} order Trotter. Lines show all permutations of our commutation graph-based orderings and baseline heuristics~\cite{tranter2019ordering}.
    Under 1\textsuperscript{st} order Trotter, xyz\_groups achieves the highest mean fidelity across all three datasets, with all methods converging near unity by $s=20$.
    Under 2\textsuperscript{nd} order Trotter, all our methods perform competitively for the 1D chain, while convergence to unity is rapid for both 2D lattices.
   Across conditions, perm~120 ($\diamond$) is the dominant permutation for xyz\_groups under 1D 1\textsuperscript{st} order and for handcrafted under 1D 2\textsuperscript{nd} order. For 2D rectangular 2\textsuperscript{nd} order, xyz\_groups perm~120 ($\diamond$) dominates. For 2D triangular 2\textsuperscript{nd} order, xyz\_groups perm~021 ($\square$) wins across all step counts.
    }
    \label{fig:summary}
\end{figure*}

\subsection{Random vs. Structured Orderings}
\label{sec:ordering_matters}

We first establish that term ordering has a practically significant effect on Trotter fidelity. Figures~\ref{fig:scatter_1d} and~\ref{fig:scatter_2d} show the fidelity of every ordering method for individual Hamiltonian instances across multiple system sizes and step counts, for the 1D XXZ chain ($L \in \{8, 14, 20\}$) and 2D lattices ($L \in \{9, 20\}$) respectively. The grey cloud provides a reference distribution against which all structured methods can be compared. 

For the 1D chain (Figure~\ref{fig:scatter_1d}), the random fidelity distribution is strikingly wide. At steps$\,=3$ and steps$\,=5$ under 1\textsuperscript{st} order Trotter, the fidelities of 30 random orderings span nearly the full $[0, 1]$ interval at every system size. Many random orderings produce near-zero fidelity while a few approach unity. This spread confirms that ordering is not a minor perturbative effect. A poor choice of ordering can severely degrade fidelity, even at moderate system sizes and step counts. The commutation-based orderings consistently sit in the upper tail of the random distribution across all system sizes. At steps$\,=3$ and steps$\,=5$, the best structured orderings achieve fidelities well above the random mean at every system size. As the step count increases to 10, all methods compress toward high fidelity, but the structured orderings still maintain an edge over the random mean. Under 2\textsuperscript{nd} order Trotter, fidelity improves across all orderings, but the relative advantage of structured methods over random ones persists.

\begin{figure*}[t!]
    \centering
    \includegraphics[width=\linewidth]{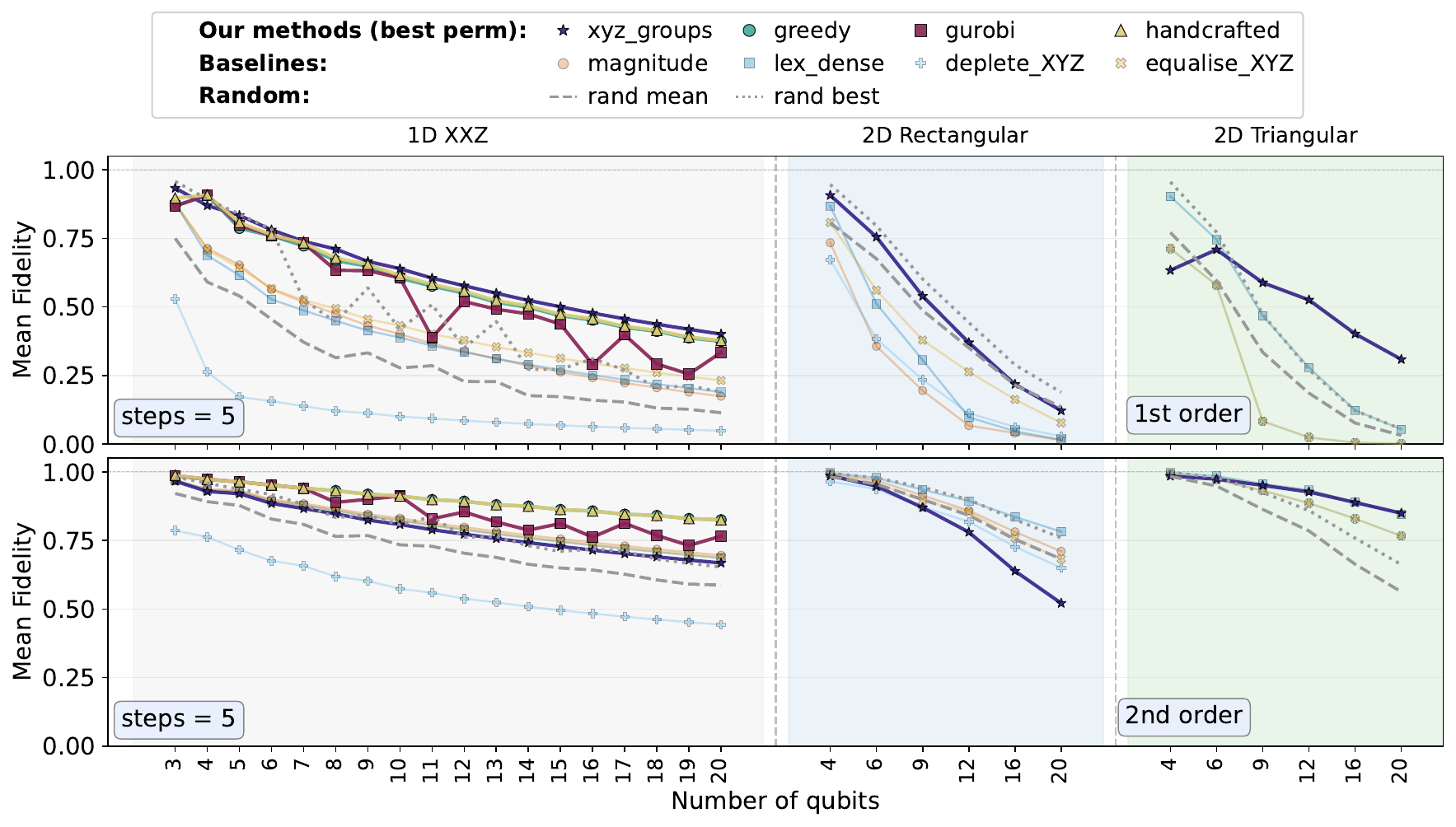}
    \caption{
    Mean fidelity as a function of system size at $s=5$ ($T=5.0$ for 1D, $T=1.0$ for 2D). Columns show 1D XXZ ($L=3$ to $20$), 2D rectangular, and 2D triangular ($L=4$ to $20$); rows show 1\textsuperscript{st} and 2\textsuperscript{nd} order Trotter. Thick and thin lines show our commutation graph-based orderings and baselines~\cite{tranter2019ordering}.
    Under 1\textsuperscript{st} order, xyz\_groups and other coloring orders cluster at the top in 1D, while xyz\_groups clearly leads in both 2D lattices with fidelity declining steeply with system size.
    Under 2\textsuperscript{nd} order, greedy, gurobi, and handcrafted lead in 1D with xyz\_groups falling below them. For the rectangular lattice, xyz\_groups degrades more steeply than other methods. For the triangular lattice, xyz\_groups leads as in 1\textsuperscript{st} order but with a less steep decline.
    }
    
    \label{fig:scaling}
\end{figure*}

The ordering effect extends to 2D systems as well. Figure~\ref{fig:scatter_2d} shows fidelity distributions for rectangular and triangular lattices at system sizes $L = 9$ and $L = 20$. The random distribution is again wide at low step counts, and structured orderings again sit in the upper tail. The effect is more pronounced in 2D than in 1D. At $L = 20$ with steps$\,=5$ under 1\textsuperscript{st} order Trotter, most random orderings produce fidelities below 0.1, while the best structured ordering (xyz\_groups) achieves substantially higher values. At steps$\,=3$ under 2\textsuperscript{nd} order Trotter, the lex\_dense ordering performs notably well across both rectangular and triangular lattices. The 2D lattices have more interaction terms per qubit due to higher connectivity, which amplifies the sensitivity to ordering choice.

These results demonstrate that the choice of Trotter ordering has a first-order impact on simulation fidelity. The effect is strongest at low step counts and larger system sizes, which is precisely the regime of greatest practical interest on near-term quantum hardware where circuit depth is limited. In all subsequent sections we focus on identifying which specific ordering strategies perform best and how their advantage scales with system size.

\subsection{Comparison of Ordering Methods}
\label{sec:coloring_comparison}
We now compare the fidelities achieved by each ordering strategy. Figure~\ref{fig:per_hamiltonian_slices} shows per-Hamiltonian fidelity at $s=5$ and $L = 12$.

\subsubsection{1D XXZ Chain}
For the 1D chain at $L = 12$ (Figure~\ref{fig:slice_1d}), xyz\_groups generally achieves the highest fidelity under 1\textsuperscript{st} order Trotter, though other graph-coloring-based orderings (greedy, gurobi, handcrafted) match xyz\_groups at larger values of $g$. These methods form a cluster well above the random baseline, clearly separated from the remaining methods. The magnitude and lexicographic baselines fall closer to the random mean. The depleteGroups and equaliseGroups strategies from~\cite{tranter2019ordering} perform noticeably worse than the \om approach.

Under 2\textsuperscript{nd} order Trotter (Figure~\ref{fig:slice_1d} right), all methods achieve higher fidelity than under 1\textsuperscript{st} order and the separation between them shrinks. Greedy and handcrafted dominate overall, with xyz\_groups falling slightly behind at larger values of $g$. The baseline orderings follow the structured orderings but with a consistent gap.

\subsubsection{2D Lattices}
The advantage of xyz\_groups widens further in 2D. At $L = 12$ (Figure~\ref{fig:slice_2d}) under 1\textsuperscript{st} order Trotter at $s=5$, xyz\_groups clearly leads both lattice types, reaching fidelities around 0.4 to 0.5 for the rectangular lattice and around 0.5 for the triangular lattice, while most other methods remain below 0.1 for the rectangular lattice and around 0.2 to 0.3 for the triangular lattice.

Under 2\textsuperscript{nd} order Trotter in 2D, xyz\_groups falls below the top for the rectangular lattice but dominates for the triangular lattice. Among the baselines, lex\_dense performs well for both lattice types. 

\subsection{Permutation Analysis}

Figure~\ref{fig:summary} shows mean fidelity across all Hamiltonians and system sizes as a function of step count, with all $m!$ group permutations shown explicitly, where a label such as $120$ denotes applying subgroups in the order $(G_1, G_2, G_0)$. 
% The aggregated view confirms the per-Hamiltonian observations and reveals consistent trends in which group permutation performs best across datasets and Trotter orders.

For the 1D chain under 1\textsuperscript{st} order Trotter (Figure~\ref{fig:summary} top-left), xyz\_groups achieves the highest mean fidelity at every step count from $s = 3$ to $s = 20$, with all methods converging toward unity at $s \geq 20$. While all permutations remain competitive, perm~120 ($\diamond$) emerges as the dominant choice for xyz\_groups, winning over $77\%$ of Hamiltonians across all step counts. Under 1D 2\textsuperscript{nd} order (Figure~\ref{fig:summary} top-center), greedy and handcrafted lead overall, with perm~012 ($\circ$) taking over as the leading permutation for xyz\_groups ($53$--$70\%$), perm~021 ($\square$) dominating for greedy ($58$--$62\%$), and perm~120 ($\diamond$) remaining dominant for handcrafted ($48$--$50\%$).

For the 2D rectangular lattice (Figure~\ref{fig:summary} bottom-center), perm~012 leads under 1\textsuperscript{st} order and perm~120 under 2\textsuperscript{nd} order. For the 2D triangular lattice (Figure~\ref{fig:summary} bottom-right), perm~021 ($\square$) is remarkably stable under 2\textsuperscript{nd} order, winning nearly $100\%$ of Hamiltonians across all step counts, while all methods converge to unity by $s = 10$. These trends suggest that the optimal permutation depends on both the Trotter order and the lattice geometry, and that a modest permutation search can reliably identify a strong ordering.

\subsection{Scaling Behavior}
\label{sec:scaling}
We now examine how structured orderings scale with system size. Figure~\ref{fig:scaling} shows mean fidelity as a function of system size at $s=5$ for 1D and 2D systems across three panels.

For the 1D chain under 1\textsuperscript{st} order Trotter (Figure~\ref{fig:scaling}, left panel), fidelity decreases steadily with system size for all methods. This is expected because larger systems have more Pauli terms and thus more non-commuting pairs, increasing the Trotter error at fixed step count. The xyz\_groups ordering degrades the slowest, maintaining the highest mean fidelity from $L = 3$ through $L = 20$, with greedy, gurobi, and handcrafted tracking closely behind. The baseline methods and random orderings decay faster, and the gap between xyz\_groups and the random mean widens as $L$ increases. Under 2\textsuperscript{nd} order Trotter, fidelity is higher across all system sizes and the advantage shifts to greedy, gurobi, and handcrafted, which outperform xyz\_groups at larger system sizes.

For the 2D rectangular lattice (Figure~\ref{fig:scaling}, center panel), the decline is steeper due to higher connectivity and more non-commuting term pairs per qubit. Under 1\textsuperscript{st} order, xyz\_groups leads but declines steeply. Under 2\textsuperscript{nd} order, xyz\_groups degrades more steeply than other methods. For the 2D triangular lattice (Figure~\ref{fig:scaling}, right panel), xyz\_groups leads under both Trotter orders despite a steep decline with system size. Under 2\textsuperscript{nd} order, the decline is less steep than under 1\textsuperscript{st} order, and xyz\_groups maintains its lead across all system sizes.

These results show that the advantage of commutation-based orderings grows with system size under 1\textsuperscript{st} order Trotter. Under 2\textsuperscript{nd} order Trotter, the spatial-locality-based colorings (greedy, gurobi, handcrafted) outperform xyz\_groups in 1D at larger system sizes, suggesting that the optimal grouping strategy depends on both the Trotter order and the lattice geometry.

\section{Discussion}
Our results show that our \om method often performs quite well, especially when the xyz-coloring is used. However, many of the plots show that the relative performance of our methods is heavily dependent on the exact Trotter settings and Hamiltonian, and sometimes in a way that is not always intuitively clear. For example, the lex\_dense method does surprisingly well compared to other methods specifically for the 2D triangular lattice when 2nd-order Trotterization with $s=3$ steps is used. A theoretically-backed explanation for some of these phenomena may allow for the construction of better future ordering strategies.

Although our method, \om, takes the commutation structure of the graph into account, it is entirely agnostic to the coefficient values associated with each Pauli string. It would be interesting to see an adaptation that maintains the spirit of our approach (keeping commuting terms together) in a way that uses the coefficient information.

The systems we consider are all Heisenberg-style systems, meaning that they can be colored with 3 or less colors; in future work, we would like to consider more exotic Hamiltonians with ``mixed" terms, e.g., structured Hamiltonians (with some notion of an interaction graph) and molecular Hamiltonians.

Although it yields the same number of colorings at the XYZ-coloring, the handcrafted coloring strategy for the 1D systems have some nice properties. First, it can be shown that Trotterization with such a scheme will preserve the particular number, an important quantity in the context of molecular Hamiltonians. This scheme could also easily be extended to other systems with more complex interaction graphs (or hypergraphs), and a similar analysis could be performed to obtain bounds on the chromatic number of the corresponding commutation graphs. Additionally, for many Hamiltonians, their commutation graphs have additional structure. Exploring the structural properties of such graphs may lead to better strategies for finding optimal or near-optimal vertex colorings on such graphs.

Lastly, the commutation graph is closely related to the notion of a \emph{compatibility graph}, whose colorings correspond to potential gate parallelizations, thus reducing circuit depth. The correlation and potential tradeoff between Trotter error and circuit depth for different orderings would make for an interesting future study.

% \begin{itemize}
%     \item Mention something regarding coefficient-aware orderings
%     \item Future work: molecular hamiltonians, learning might be better there since number of colors grows with system size
% \end{itemize}

\section{Acknowledgments}
This work was supported by the U.S. Department of Energy through the Los Alamos National Laboratory. Los Alamos National Laboratory is operated by Triad National Security, LLC, for the National Nuclear Security Administration of U.S. Department of Energy (Contract No. 89233218CNA000001). Research presented in this article was supported by the NNSA's Advanced Simulation and Computing Beyond Moore's Law Program at Los Alamos National Laboratory and by the Laboratory Directed Research and Development program of Los Alamos National Laboratory under project number 20230049DR. LANL report LA-UR-26-23269.

\appendices

\section{2D Hamiltonian Details and Indexing Scheme}
\subsection{Rectangular Grid}
\label{sec:rectLattice}
For a $W \times L$ rectangular grid, we consider qubits at sites $(w, \ell)$ with $w \in \{0, 1, \dots, W-1\}$ and $\ell \in \{0, 1, \dots, \ell\}$. Two sites are connected by an edge if their euclidean distance is exactly 1, i.e., each site will have 4 neighbors in the four main cardinal directions (up, down, left, right). We only consider open boundary conditions (i.e. non-periodic).
\subsection{Triangular Lattice}
\label{sec:triLattice}
Every point in the triangular lattice can be considered as a linear combintation of the vectors $v_1 = (1,0)$ and $v_2 = (\frac{1}{2}, \frac{\sqrt{3}}{2})$. Considering all linear combinations $wv_1 + \ell v_2$ with  $w \in \{0, 1, \dots, W-1\}$ and $\ell \in \{0, 1, \dots, L\}$ produces a skewed grid of points in the shape of a right-leaning parallelogram (see Figure \ref{fig:tri_lattice}). In the basis $\mathcal{B} = \{v_1, v_2\}$, the previously mentioned collection of points forms a \emph{rectangular} grid of points of the form  $(w, \ell)$ with $w \in \{0, 1, \dots, W-1\}$ and $\ell \in \{0, 1, \dots, \ell\}$. For each site, the nearest and next-nearest neighbors (with respect to the original coordinate system) are connected by an edge with weight 1 and $\alpha$ respectively. We only consider open boundary conditions (i.e. non-periodic).

\subsection{Qubit Indexing}
\label{sec:qubitIndexing}
In regards to the (linear) qubit indexing, we use the standard indexing scheme for the 1D system (where an intersection occurs between qubit $i$ and qubit $j$ if and only if $|i-j| = 1$). For the 2D Hamiltonians we consider, the qubit positions  (possibly after some change of basis) form a rectangular grid. We choose a linear indexing scheme where the qubits are labeled by following a ``snaking" pattern through the various rows of the grid, starting in the bottom-left corner (see Figure \ref{fig:tri_lattice}). This ensures that qubits with close indices correspond to qubits that are physically close together (in the grid), note that the converse is not necessarily true.

\begin{figure}
\includegraphics[width=\linewidth]{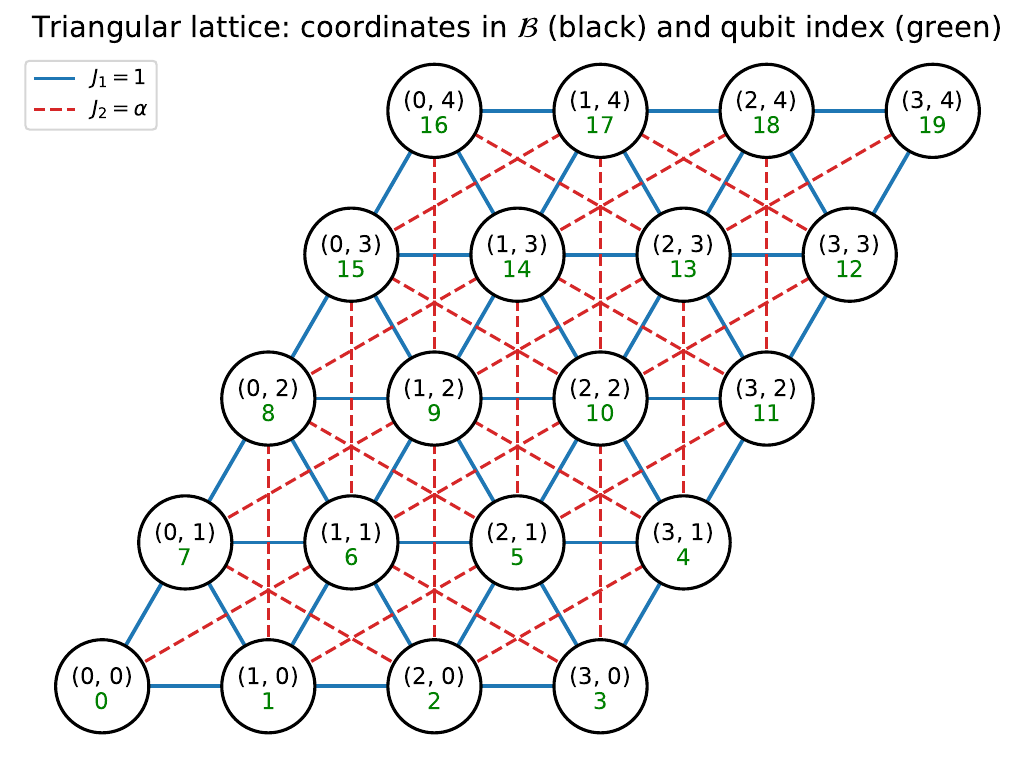}
\caption{The interaction graph for the 2D triangular lattice Hamiltonian with width $W = 4$ and $L = 5$. While the positions of the sites form a paralleogram, they form a rectangular grid with respect to the basis $\mathcal{B}$ described in Appendix \ref{sec:triLattice}. The coordinates in $\mathcal{B}$ are given in black and the qubit index (formed using the snaking pattern) is given in green for each site/qubit. The blue (solid) and red (dashed) edges represent the nearest and next-nearest neighbor interactions respectively.}
\label{fig:tri_lattice}
\end{figure}

\clearpage

\bibliographystyle{unsrt}
\bibliography{sample}
\end{document}